\newcommand{\calD}{\mathcal{D}}
\newcommand{\calK}{\mathcal{K}}
\newcommand{\calL}{\mathcal{L}}
\newcommand{\calP}{\mathcal{P}}
\newcommand{\calU}{\mathcal{U}}
\newcommand{\calX}{\mathcal{X}}
\newcommand{\calY}{\mathcal{Y}}
\newcommand{\calZ}{\mathcal{Z}}
\newcommand{\N}{{\mathbb N}}
\newcommand{\R}{{\mathbb R}}
\newcommand{\ol}[1]{\overline{#1}}
\newcommand{\I}[1]{\mathrm{\bf 1}_{#1}}
\renewcommand{\Pr}[1]{\mathrm{P}\left(#1\right)}
\newcommand{\E}[1]{{\mathbb E}\left(#1\right)}
\newcommand{\Ex}[2]{{\mathbb E}_{#1}\left(#2\right)}
\newcommand{\bsmatrix}{\left(\begin{smallmatrix}}
\newcommand{\esmatrix}{\end{smallmatrix}\right)}
\newcommand{\eq}[1]{\begin{equation}
#1
\end{equation}}
\newcommand{\eqn}[2]{\begin{equation}
\label{#1}
#2
\end{equation}}
\newcommand{\al}[1]{\begin{align}
#1
\end{align}}
\newcommand{\aln}[1]{\begin{align}
#1
\end{align}}
\newcommand{\dsty}[1]{$\displaystyle #1$}
\newtheorem{definition}{Definition}
\newtheorem{theorem}{Theorem}
\newtheorem{proposition}{Proposition}
\newtheorem{lemma}{Lemma}
\newcounter{example}
\begin{document}

\title{On the Contractivity of Privacy Mechanisms}

\author{\IEEEauthorblockN{Mario Diaz}
\IEEEauthorblockA{Arizona State University and Harvard University\\
mdiaztor@\{asu,g.harvard\}.edu}
\and
\IEEEauthorblockN{Lalitha Sankar}
\IEEEauthorblockA{Arizona State University\\
lsankar@asu.edu}\thanks{This material is based upon work supported by the National Science Foundation under Grant No. CCF-1350914 and an ASU seed grant.}
\and
\IEEEauthorblockN{Peter Kairouz}
\IEEEauthorblockA{Stanford University\\
kairouzp@stanford.edu}}

\maketitle

\begin{abstract}
We present a novel way to compare the statistical cost of privacy mechanisms using their Dobrushin coefficient. Specifically, we provide upper and lower bounds for the Dobrushin coefficient of a privacy mechanism in terms of its maximal leakage and local differential privacy guarantees. Given the geometric nature of the Dobrushin coefficient, this approach provides some insights into the general statistical cost of these privacy guarantees. We highlight the strength of this method by applying our results to develop new bounds on the \mbox{$\ell_2$-minimax} risk in a distribution estimation setting under maximal leakage constraints. Specifically, we find its order with respect to the sample size and privacy level, allowing a quantitative comparison with the corresponding local differential privacy \mbox{$\ell_2$-minimax} risk.
\end{abstract}

\section{Introduction}

Recently, several compelling definitions for privacy have arisen, notable among them are the context-free (statistics agnostic) notion of differential privacy and the context-aware information-theoretic measures such as mutual information and maximal leakage. Context-aware approaches provide average-case privacy guarantees and allow for a range of adversarial models.

Both context-free and context-aware definitions of privacy require designing a probabilistic mapping (henceforth referred to as privacy mechanism) that satisfies the  desired privacy requirement. Despite the operational interpretations, comparing privacy leakage measures numerically does not provide much insight. One approach is to evaluate the effect of the privacy requirement on utility (i.e., the statistical cost of using the corresponding privacy mechanism). The aim of this paper is to provide a framework in which different privacy metrics can be compared in terms of their general statistical cost.

We compare different privacy mechanisms via their Dobrushin coefficient, which is equal to the contractivity coefficient of the mechanism with respect to total variation distance. Specifically, we provide upper bounds on the privacy guarantees of a local differentially private (L-DP) mechanism and a maximal leakage (MaxL) private mechanism in terms of its Dobrushin coefficient. Conversely, we provide upper bounds on the Dobrushin coefficient of any mechanism in terms of its L-DP and MaxL privacy guarantees. Since a small Dobrushin coefficient means a highly contractive mapping, the latter bounds are particularly useful to assess the cost of both aforementioned privacy guarantees for a wide range of statistical problems, including hypothesis testing and distribution estimation. More specifically, the Dobrushin coefficient can be used to provide strong data processing inequalities (SDPIs) for any $f$-divergence. Since these SDPIs are a fundamental part of many standard statistical methodologies, e.g., Le Cam's method, the Dobrushin coefficient leads to an immediate evaluation of the statistical cost of a privacy mapping.

We highlight the value of this approach by presenting new results on the \mbox{$\ell_2$-minimax} risk for a distribution estimation setting under MaxL constraints, i.e., we compute the best worst-case expected $\ell_2$-loss of a distribution estimator when using data sanitized by a privacy mechanism with specific MaxL guarantees. We show that the minimax risk order with respect to the sample size $n$ and privacy level $\alpha$ has order $(n(2^\alpha-1))^{-1}$. This is the first step to enable quantitative comparisons with the corresponding local differential private minimax risk (see, for example, \cite{pastore2016locally,duchi2017minimax,ye2017asymptotically}). The value of this approach is in exploiting the connection between the Dobrushin coefficient and Le Cam's method to obtain bounds more directly. 

This paper is organized as follows. In Section~\ref{Section:Setting} we introduce the main concepts and terminology used in this paper. In Section~\ref{Section:MainResults} we present our main results regarding the relation between L-DP or MaxL privacy guarantees and the Dobrushin coefficient of privacy mechanisms, and illustrate some of their consequences in Section~\ref{Section:Illustration}. In Section~\ref{Section:Minimax} we apply our results to study the $\ell_2$-minimax risk of a distribution estimation problem under maximal leakage constraints. The proof of our main results are provided in Appendix~\ref{Section:Proofs}.

\section{Problem Setting and Preliminaries}
\label{Section:Setting}

\subsection{Privacy Notions}

We assume that $\calX$ and $\calY$ are finite sets. A privacy mechanism is a function $W:\calX\times\calY\to\R$ such that $W(x,y)\geq0$ for all $(x,y)\in\calX\times\calY$ and, for all $x\in\calX$,
\eq{\sum_{y\in\calY} W(x,y) = 1.}
Let $\calP(\calZ)$ be the set of probability measures on a discrete set $\calZ$. Every privacy mechanism $W:\calX\times\calY\to\R$ can be identified with a mapping $W':\calP(\calX)\to\calP(\calY)$ determined by
\eq{\phantom{y\in\calY.} \quad \quad W'(P)(y) = \sum_{x\in\calX} P(x) W(x,y), \quad \quad y\in\calY.}
Note that this correspondence defines a bijection. Thus, by abuse of notation, we denote by $W$ both $W$ and $W'$.

\begin{definition}[\!\!\cite{duchi2013local}]
For $\alpha\in[0,\infty]$, a privacy mechanism $W$ is said to be $\alpha$-locally differentially private if
\eqn{eq:DefLocDiffPriv}{\max_{y\in\calY} \max_{x_1,x_2\in\calX} \frac{W(x_1,y)}{W(x_2,y)} \leq 2^\alpha.}
\end{definition}

For convenience, we define $0/0 := 1$ and $1/0:=\infty$. Following tradition, all logarithms are taken in base 2.

\begin{definition}[\!\!\cite{issa2016operational}]
\label{Def:MaxL}
For $\alpha\in[0,\infty]$, a privacy mechanism $W$ is said to be $\alpha$-MaxL private if
\eqn{eq:DefMLPriv}{\sum_{y\in\calY} \max_{x\in\calX} W(x,y) \leq 2^\alpha.}
\end{definition}

The operational significance of Def.~\ref{Def:MaxL} comes from the following result of Issa et al. \cite{issa2016operational}.

\begin{proposition}[\!\!\cite{issa2016operational}]
Let $X$ and $Y$ be random variables with support $\calX$ and $\calY$, respectively. Define
\eqn{}{\calL(X\to Y) := \sup_{U - X - Y} \log \frac{\Pr{U = \hat{U}(Y)}}{\max_{u\in\calU} P_U(u)},}
where the support of $U$ is finite but of arbitrary size, and $\hat{U}(Y)$ is the maximum a posteriori estimator. Then,
\eqn{}{\calL(X\to Y) = \log \sum_{y\in\calY} \max_{x\in\calX} P_{Y|X}(y|x).}
\end{proposition}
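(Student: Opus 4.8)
The plan is to establish the two directions separately: for every admissible $U$ a direct estimate gives $\frac{\Pr{U=\hat U(Y)}}{\max_u P_U(u)} \le \sum_y \max_x P_{Y|X}(y|x)$, and for the reverse I would exhibit a sequence of choices of $U$ whose normalized MAP success probability converges to $\sum_y \max_x P_{Y|X}(y|x)$.

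For the upper bound I would fix $U$ with $U-X-Y$ and use that the MAP rule yields $\Pr{U=\hat U(Y)} = \sum_{y\in\calY}\max_u P_{U,Y}(u,y)$. By the Markov property $P_{U,Y}(u,y) = P_U(u)\sum_{x\in\calX}P_{X|U}(x|u)P_{Y|X}(y|x)$, and since the inner sum is a convex combination of the numbers $\{P_{Y|X}(y|x)\}_{x}$ it is at most $\max_x P_{Y|X}(y|x)$; hence $\max_u P_{U,Y}(u,y)\le\bigl(\max_u P_U(u)\bigr)\bigl(\max_x P_{Y|X}(y|x)\bigr)$. Summing over $y$, dividing by $\max_u P_U(u)$, taking $\log$, and then the supremum over $U$ gives this direction.

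For the lower bound the idea is to flatten the prior on $U$ while forcing $X$ to be a deterministic function of $U$, so that the MAP estimator effectively selects, for each $y$, a symbol $x$ attaining $\max_x P_{Y|X}(y|x)$. Concretely, for a large integer $k$ I would take $\calU_k=\{(x,i):x\in\calX,\ 1\le i\le\lceil kP_X(x)\rceil\}$ and the channel $P_{U_k|X}((x',i)|x)=\I{x'=x}/\lceil kP_X(x)\rceil$, which respects the given marginal $P_X$ and satisfies $U_k-X-Y$. Then $P_{U_k}(x,i)=P_X(x)/\lceil kP_X(x)\rceil=:a_k(x)$ lies in $\bigl((k+1/p_{\min})^{-1},\,k^{-1}\bigr]$ with $p_{\min}$ the least value of $P_X$ on $\calX$, and, crucially, $P_{U_k,Y}((x,i),y)=a_k(x)P_{Y|X}(y|x)$ does not depend on $i$. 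The normalized MAP success probability is therefore $\bigl(\sum_y\max_x a_k(x)P_{Y|X}(y|x)\bigr)/\max_x a_k(x)\ \ge\ \bigl(\min_x a_k(x)/\max_x a_k(x)\bigr)\sum_y\max_x P_{Y|X}(y|x)$, and since $\min_x a_k(x)/\max_x a_k(x)\ge k/(k+1/p_{\min})\to1$, the supremum over $U$ is at least $\log\sum_y\max_x P_{Y|X}(y|x)$, matching the upper bound.

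The step I expect to be the main obstacle is this lower-bound construction: the naive choice $U=X$ gives the wrong normalization unless $P_X$ is uniform, and when $P_X$ has irrational entries no single $U$ attains the bound, so the blocking/limiting argument (blocks of size $\lceil kP_X(x)\rceil$) seems unavoidable. Everything else is routine bookkeeping with the MAP success probability and the Markov factorization; if $P_X$ is rational one may instead take $k$ a common denominator and obtain equality, so the supremum is then attained.
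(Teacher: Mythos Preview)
The paper does not supply its own proof of this proposition; it is quoted verbatim from \cite{issa2016operational} and stated without argument, so there is nothing in the paper to compare your proposal against.

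On its own merits your argument is correct. The upper bound via the Markov factorization $P_{U,Y}(u,y)=P_U(u)\sum_x P_{X|U}(x|u)P_{Y|X}(y|x)$ together with the convex-combination estimate is the standard route. Your lower-bound construction---replacing each $x$ by $\lceil kP_X(x)\rceil$ equiprobable copies so that $P_{U_k}$ becomes asymptotically uniform while $X$ remains a deterministic function of $U_k$---is exactly the ``shattering'' idea, and the bookkeeping is right: $a_k(x)=P_X(x)/\lceil kP_X(x)\rceil\in[1/(k+1/p_{\min}),1/k]$ gives $\min_x a_k(x)/\max_x a_k(x)\to1$, which is all that is needed. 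Your remark that for rational $P_X$ one may take $k$ a common denominator and attain the supremum is also correct. The only cosmetic point is that the left endpoint of your interval for $a_k(x)$ need not be strict, but this is irrelevant to the limit.
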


The Dobrushin coefficient of a probabilistic mapping is defined as follows.

\subsection{Dobrushin Coefficient}
\label{Subsection:Dobrushin}

\begin{definition}[\!\!\!\cite{dobrushin1956central}]
The Dobrushin coefficient $\ol{\alpha}(W)$ of a mapping $W:\calP(\calX)\to\calP(\calY)$ is defined by
\eqn{eq:DobrushinExpression}{\ol{\alpha}(W) = \max_{x_1,x_2\in\calX} \frac{1}{2} \sum_{y\in\calY} |W(x_1,y) - W(x_2,y)|.} 
\end{definition}

For $P,Q\in\calP(\calZ)$, their total variation distance is given by
\eq{\|P - Q\| = \frac{1}{2} \sum_{z\in\calZ} |P(z) - Q(z)|.}
In \cite{dobrushin1956central}, Dobrushin provided the next characterization for $\ol{\alpha}(W)$.

\begin{proposition}[\!\!\cite{dobrushin1956central}]
\label{Proposition:DobrushinEtaTV}
For every mapping $W:\calP(\calX)\to\calP(\calY)$,
\eqn{eq:DobrushinEtaTV}{\ol{\alpha}(W) = \sup_{\begin{smallmatrix} P_0,P_1\in\calP(\calX) \\ \|P_0 - P_1\| > 0 \end{smallmatrix}} \frac{\|W(P_0) - W(P_1)\|}{\|P_0 - P_1\|}.}
\end{proposition}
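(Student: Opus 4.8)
The plan is to establish Proposition~\ref{Proposition:DobrushinEtaTV} by proving two inequalities: that the supremum on the right-hand side is at most $\ol{\alpha}(W)$, and that it is at least $\ol{\alpha}(W)$. The second inequality is essentially immediate: for any fixed $x_1,x_2\in\calX$, taking $P_0=\delta_{x_1}$ and $P_1=\delta_{x_2}$ (the point masses) gives $W(P_0)(y)=W(x_1,y)$ and $W(P_1)(y)=W(x_2,y)$, while $\|P_0-P_1\|=1$ whenever $x_1\neq x_2$. Hence the ratio in \eqref{eq:DobrushinEtaTV} equals $\frac{1}{2}\sum_{y}|W(x_1,y)-W(x_2,y)|$, and maximizing over $x_1,x_2$ shows the supremum is at least $\ol{\alpha}(W)$ as defined in \eqref{eq:DobrushinExpression}. (The case $x_1=x_2$ is excluded since it gives $\|P_0-P_1\|=0$, but it would contribute $0$ anyway.)

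The substantive direction is the upper bound: for arbitrary $P_0,P_1\in\calP(\calX)$ with $\|P_0-P_1\|>0$, one must show $\|W(P_0)-W(P_1)\|\leq \ol{\alpha}(W)\,\|P_0-P_1\|$. The key idea is to decompose the signed measure $P_0-P_1$ into its positive and negative parts. Write $\mu=P_0-P_1$; by the Jordan decomposition on the finite set $\calX$, $\mu=\mu^+-\mu^-$ where $\mu^+,\mu^-$ are nonnegative with disjoint supports, and $\|P_0-P_1\|=\mu^+(\calX)=\mu^-(\calX)=:t$. If $t=0$ the inequality is trivial, so assume $t>0$ and set $\nu_0=\mu^+/t$, $\nu_1=\mu^-/t$, which are genuine probability measures on $\calX$. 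Then $P_0-P_1=t(\nu_0-\nu_1)$, and by linearity of $W$, $W(P_0)-W(P_1)=t\bigl(W(\nu_0)-W(\nu_1)\bigr)$. It therefore suffices to bound $\|W(\nu_0)-W(\nu_1)\|$ by $\ol{\alpha}(W)$ for probability measures $\nu_0,\nu_1$ — in other words, to reduce to the case $\|P_0-P_1\|=1$.

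For that reduced case, I would write $W(\nu_0)-W(\nu_1)=\sum_{x_0,x_1}\nu_0(x_0)\nu_1(x_1)\bigl(W(x_0,\cdot)-W(x_1,\cdot)\bigr)$, using that $\sum_{x_1}\nu_1(x_1)=1$ and $\sum_{x_0}\nu_0(x_0)=1$ to insert the extra summation. Applying the triangle inequality for total variation distance gives
\eq{\|W(\nu_0)-W(\nu_1)\| \leq \sum_{x_0,x_1}\nu_0(x_0)\nu_1(x_1)\,\|W(x_0,\cdot)-W(x_1,\cdot)\| \leq \ol{\alpha}(W)\sum_{x_0,x_1}\nu_0(x_0)\nu_1(x_1) = \ol{\alpha}(W),}
where the middle step uses the definition \eqref{eq:DobrushinExpression} to bound each term $\frac{1}{2}\sum_y|W(x_0,y)-W(x_1,y)|\leq\ol{\alpha}(W)$, and the last step uses that both $\nu_0$ and $\nu_1$ are probability measures. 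Combining this with the scaling identity from the previous paragraph yields $\|W(P_0)-W(P_1)\|\leq\ol{\alpha}(W)\|P_0-P_1\|$ for all $P_0,P_1$, completing the upper bound.

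I do not expect a serious obstacle here; the only point requiring slight care is the reduction step — specifically, verifying that the Jordan decomposition of $P_0-P_1$ really does have equal total masses on its positive and negative parts (this is where $P_0,P_1$ being probability measures, hence $\mu(\calX)=0$, is used) and that this common mass equals the total variation distance. The convexity argument in the final display is then a routine triangle-inequality estimate, and the matching lower bound via point masses is essentially definitional.
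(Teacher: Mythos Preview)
Your proof is correct and is essentially the classical argument for Dobrushin's characterization: the lower bound via point masses is immediate, and the upper bound follows from the Jordan decomposition of $P_0-P_1$ together with a convexity/triangle-inequality estimate over pairs of atoms. Note, however, that the paper does not supply its own proof of this proposition; it is stated with attribution to Dobrushin~\cite{dobrushin1956central} and used as a black box, so there is no in-paper argument to compare against.
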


It is straightforward to verify that the right hand side of \eqref{eq:DobrushinEtaTV} is upper bounded by one for every $W$. In particular, $\ol{\alpha}(W)$ is the constant of the contractive mapping $W$.

Note that when $\ol{\alpha}(W)$ is close to zero, the output of the mapping $W$ is essentially the same for every input distribution. This cause severe degradation to the utility of any statistical methodology applied to the output of this mapping. Thus, this geometric feature of the Dobrushin coefficient offers a quantitative assessment of the statistical cost of a mapping.

\subsection{$f$-Divergences}

Another important property of the Dobrushin coefficient comes from its connection with $f$-divergences.

\begin{definition}
Let $f:\R_+\to\R$ be convex with $f(1)=0$. For $P,Q\in\calP(\calZ)$, its $f$-divergence $D_f(P \| Q)$ is determined by
\eqn{}{D_f(P \| Q) := \sum_{z\in\calZ} Q(z) f\left(\frac{P(z)}{Q(z)}\right).}
For $W:\calP(\calX)\to\calP(\calY)$, let $\eta_{f}(W)$ be defined by
\eqn{}{\eta_{f}(W) := \sup_{\begin{smallmatrix}P_0,P_1\in\calP(\calX) \\ 0<D_{f}(P_0 \| P_1)<\infty\end{smallmatrix}} \hspace{-8pt} \frac{D_{f}(W(P_0) \| W(P_1))}{D_{f}(P_0 \| P_1)}.}
\end{definition}

For example, $f(x) = x\log(x)$ leads to the KL-divergence,
\eq{D_f(P \| Q) = D_{\text{KL}}(P \| Q) := \sum_{z\in\calZ} P(z) \log\frac{P(z)}{Q(z)}.}
For this choice of $f$ it is customary to denote $\eta_{f}$ by $\eta_{\text{KL}}$. Similarly, when $f(x) = |x-1|/2$, it is easy to show that $D_f(P \| Q) = \|P - Q\|$. In this case, $\eta_{f}$ is usually denoted by $\eta_{\text{TV}}$. With this notation, Proposition~\ref{Proposition:DobrushinEtaTV} says that
\eqn{eq:DobrushinEtaTV2}{\eta_{\text{TV}}(W) = \ol{\alpha}(W).}
In the sequel, $\eta_{\text{TV}}$ and $\ol{\alpha}$ are used interchangeably. Remarkably, $\eta_{\text{TV}}$ is an upper bound for $\eta_{f}$ for every $f$, see, for example, \cite[Prop.~II.4.10]{cohen1998comparisons}.

\begin{proposition}[\!\!\!\cite{cohen1998comparisons}]
For every $f:\R_+\to\R$ convex with $f(1)=0$, it holds true that
\eqn{}{\eta_{f}(W) \leq \eta_{\text{TV}}(W).}
\end{proposition}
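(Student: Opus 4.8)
\emph{Proof strategy.} The plan is to establish the pointwise contraction bound
\[
D_{f}(W(P_0)\,\|\,W(P_1)) \;\leq\; \eta_{\text{TV}}(W)\, D_{f}(P_0\,\|\,P_1) \qquad \text{for all } P_0,P_1\in\calP(\calX),
\]
since the assertion then follows by taking the supremum over the admissible pairs $(P_0,P_1)$ in the definition of $\eta_f$. The first step is to reduce to a one-parameter family of test functions. By the classical integral representation of convex functions, any convex $f:\R_+\to\R$ with $f(1)=0$ can be written as $f(x)=\int_{(0,\infty)}\phi_t(x)\,\mu_f(\dif t)+c\,(x-1)$, where $\mu_f$ is a nonnegative measure (the second-derivative measure of $f$), $c\in\R$, and $\phi_t(x):=(x-t)^+-(1-t)^+$ is convex with $\phi_t(1)=0$; for $t\leq1$ one uses instead the affinely equivalent, nonnegative generator $x\mapsto(t-x)^+$, which defines the same divergence, so that the representation converges and Tonelli's theorem applies. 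Since $c(x-1)$ contributes zero to every $f$-divergence, this yields $D_{f}(P\|Q)=\int_{(0,\infty)}D_{\phi_t}(P\|Q)\,\mu_f(\dif t)$, so it suffices to prove the pointwise bound for each ``hockey-stick'' divergence $D_{\phi_t}$.

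For the hockey-stick case I would start from the elementary identity $D_{\phi_t}(P\|Q)=\max_{A}\big(P(A)-t\,Q(A)\big)-(1-t)^+$. Writing $Q_i:=W(P_i)$ and $Q_i(A)=\sum_{x\in\calX}P_i(x)\,W(A\mid x)$, this gives $D_{\phi_t}(Q_0\|Q_1)=\max_{A}\big\langle P_0-tP_1,\,g_A\big\rangle-(1-t)^+$ with $g_A(x):=W(A\mid x)=\sum_{y\in A}W(x,y)$. The key observation is that $g_A$ is doubly constrained: it takes values in $[0,1]$, and by the defining formula~\eqref{eq:DobrushinExpression} for $\ol{\alpha}(W)=\eta_{\text{TV}}(W)$ its oscillation obeys $\max_x g_A(x)-\min_x g_A(x)\leq\eta_{\text{TV}}(W)$, because $|g_A(x_1)-g_A(x_2)|=|W(A\mid x_1)-W(A\mid x_2)|\leq\norm{W(\cdot\mid x_1)-W(\cdot\mid x_2)}\leq\ol{\alpha}(W)$. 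Writing $g_A=\ell\,\mathbf{1}+\eta_{\text{TV}}(W)\,h$ with $\ell:=\min_x g_A(x)\geq0$ and $h\in[0,1]^{\calX}$, and using $\langle P_0-tP_1,\mathbf{1}\rangle=1-t$ together with $\langle P_0-tP_1,h\rangle\leq\sum_{x}(P_0(x)-tP_1(x))^+$, a short linear-programming estimate --- one splits on the sign of $1-t$, and when $t<1$ also uses $\ell\leq1-\eta_{\text{TV}}(W)\max_x h(x)$ (a consequence of $\max_x g_A(x)\leq1$) and minimizes $\langle P_0-tP_1,\,(\max_x h(x))\mathbf{1}-h\rangle$ coordinatewise over $[0,1]^{\calX}$ --- gives
\[
\max_{A}\big\langle P_0-tP_1,g_A\big\rangle \;\leq\; \eta_{\text{TV}}(W)\sum_{x}\big(P_0(x)-tP_1(x)\big)^{+} + \big(1-\eta_{\text{TV}}(W)\big)(1-t)^{+},
\]
which is exactly $D_{\phi_t}(Q_0\|Q_1)\leq\eta_{\text{TV}}(W)\,D_{\phi_t}(P_0\|P_1)$ after rearranging.

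I expect the last estimate to be the only real obstacle: the naive data-processing bound (take the supremum of $\langle P_0-tP_1,g\rangle$ over all $g\in[0,1]^{\calX}$) merely recovers the trivial contraction factor $1$, so obtaining the sharp factor $\eta_{\text{TV}}(W)$ forces one to exploit the oscillation constraint on $g_A$ \emph{simultaneously} with the range constraint, while carrying the additive offset $(1-t)^+$ correctly. (One could instead avoid the integral representation and argue from the Legendre dual $D_f(P\|Q)=\sup_g\{\langle P,g\rangle-\langle Q,f^{*}\circ g\rangle\}$, quantifying the Jensen gap created by $W$; but the accounting there appears heavier and less transparent than the hockey-stick route above.)
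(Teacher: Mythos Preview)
The paper does not prove this proposition; it is quoted as a known result from \cite{cohen1998comparisons} (specifically Prop.~II.4.10 there), so there is no ``paper's own proof'' to compare against.

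That said, your proposal is sound and follows the standard route used in that reference and elsewhere: reduce to the extremal hockey-stick generators via the second-derivative (Choquet) representation of convex $f$, then prove the pointwise contraction for each $D_{\phi_t}$ by exploiting that $g_A(x)=W(A\mid x)$ lies in $[0,1]^{\calX}$ with oscillation at most $\ol{\alpha}(W)$. Your linear-programming estimate is correct; writing $v=P_0-tP_1$, $V_+=\sum_x v_x^+$, $V_-=\sum_x v_x^-$, the bound $\langle v,g\rangle\leq MV_+-\ell V_-$ with $0\leq\ell\leq M\leq1$ and $M-\ell\leq\eta_{\text{TV}}(W)$ reduces in both cases $t\lessgtr1$ to exactly $\eta_{\text{TV}}(W)\,V_+ + (1-\eta_{\text{TV}}(W))(1-t)^+$, as you claim. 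The only place to be a little more explicit is the integral representation: for general convex $f$ one needs the two-sided form $f(x)=f'_+(1)(x-1)+\int_{(0,1]}(t-x)^+\,\mu(\dif t)+\int_{(1,\infty)}(x-t)^+\,\mu(\dif t)$ to guarantee nonnegativity and hence the applicability of Tonelli, which you correctly flag. With that caveat, the argument goes through.
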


In particular, we have that \dsty{\eta_{\text{KL}}(W) \leq \eta_{\text{TV}}(W)}. Therefore, for every privacy mechanism $W$ and $P_0,P_1\in\calP(\calX)$,
\eqn{eq:DPIKLMaxL}{D_{\text{KL}}(W(P_0) \| W(P_1)) \leq \eta_{\text{TV}}(W) D_{\text{KL}}(P_0 \| P_1).}
This inequality will be used to derive a lower bound for the \mbox{$\ell_2$-minimax} risk in a distribution estimation setting under maximal leakage constraints. We finish this section pointing out that more refined contraction coefficients have been the subject of recent studies, see, for example, \cite{polyanskiy2016dissipation,Makur2017linear} and references therein.

\section{Main Results}
\label{Section:MainResults}

\subsection{Local Differential Privacy}
\label{MainResults:LocalDifferentialPrivacy}

The following theorem provides an upper bound for the Dobrushin coefficient of a privacy mechanism in terms of its L-DP guarantee.

\begin{theorem}
\label{Thm:LDPDobrushinCoefficient}
If a privacy mechanism $W$ is $\alpha$-locally differentially private, then
\eqn{eq:LDPDobrushin}{\eta_\text{TV}(W) \leq \frac{2^\alpha-1}{2^\alpha+1}.}
\end{theorem}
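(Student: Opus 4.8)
The plan is to bound the quantity $\frac12\sum_{y\in\calY}|W(x_1,y)-W(x_2,y)|$ appearing in \eqref{eq:DobrushinExpression} for an arbitrary pair $x_1,x_2\in\calX$, using only the pointwise multiplicative constraint \eqref{eq:DefLocDiffPriv}. First I would fix $x_1,x_2$ and split the index set $\calY$ into $\calY_+ = \{y : W(x_1,y)\geq W(x_2,y)\}$ and its complement $\calY_- = \calY\setminus\calY_+$. Writing $a = \sum_{y\in\calY_+} W(x_1,y)$ and $b = \sum_{y\in\calY_+} W(x_2,y)$, and using that both rows of $W$ sum to one, one gets
\eqn{eq:tvsplit}{\sum_{y\in\calY}|W(x_1,y)-W(x_2,y)| = (a-b) + \bigl((1-b)-(1-a)\bigr) = 2(a-b),}
so the Dobrushin expression for this pair equals $a-b$. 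It remains to show $a-b\leq \frac{2^\alpha-1}{2^\alpha+1}$.

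The $\alpha$-L-DP condition gives, for every $y$, both $W(x_1,y)\leq 2^\alpha W(x_2,y)$ and $W(x_2,y)\leq 2^\alpha W(x_1,y)$. Summing the first over $y\in\calY_+$ yields $a\leq 2^\alpha b$; summing the second over $y\in\calY_- $ gives $(1-a)\leq 2^\alpha(1-b)$, i.e. $1-a \leq 2^\alpha - 2^\alpha b$. Adding a suitable combination of these two inequalities will isolate $a-b$: concretely, from $a \le 2^\alpha b$ and $1 - a \le 2^\alpha(1-b)$ one obtains, after eliminating, a bound of the form $a - b \le \frac{2^\alpha-1}{2^\alpha+1}$. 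I would carry this out by parametrizing $b\in[0,1]$, noting $a \le \min\{2^\alpha b,\ 1 - 2^\alpha(1-b)\}$ — wait, more carefully $a$ satisfies $a\le 2^\alpha b$ and $a \ge 1 - 2^\alpha(1-b)$ is not what we want; rather we want an \emph{upper} bound on $a$, and the second inequality rearranges to $a \le 1$, which is vacuous, so the binding constraints are $a \le 2^\alpha b$ together with the \emph{lower} bound $b \ge \frac{a - (2^\alpha-1)}{\,}$ coming from $1-a\le 2^\alpha(1-b)$, i.e. $b \ge 1 - \tfrac{1-a}{2^\alpha} = \tfrac{2^\alpha - 1 + a}{2^\alpha}$. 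Combining $a \le 2^\alpha b$ with $b \ge \tfrac{2^\alpha-1+a}{2^\alpha}$ is not yet a contradiction; instead I maximize $a-b$ subject to $a\le 2^\alpha b$ and $a \le 2^\alpha b$ — the correct pair of active constraints at the extremum is $a = 2^\alpha b$ (push $a$ up) and $1-a = 2^\alpha(1-b)$ (push $1-a$ up, i.e. $a$ down relative to $b$); these two linear equations have the unique solution $b = \tfrac{1}{2^\alpha+1}$, $a = \tfrac{2^\alpha}{2^\alpha+1}$, giving $a-b = \tfrac{2^\alpha-1}{2^\alpha+1}$, and a short convexity/endpoint argument shows this is indeed the maximum of $a-b$ over the feasible region. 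Taking the maximum over all pairs $x_1,x_2$ and invoking \eqref{eq:DobrushinEtaTV2} completes the proof.

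The main obstacle is purely bookkeeping: setting up the two-variable linear program in $(a,b)$ correctly and verifying that the claimed vertex is the maximizer (one must check the objective $a-b$ against the other vertices of the feasible polygon, including the degenerate cases $\alpha=0$ and $\alpha=\infty$, where the bound reads $0$ and $1$ respectively, consistent with $W$ being constant resp. unconstrained). No deep inequality is needed — everything follows from the row-sum normalization and the two directions of the multiplicative L-DP bound.
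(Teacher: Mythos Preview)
Your approach is sound and genuinely different from the paper's, but the execution contains a sign slip that you should fix before it will go through. When you sum the second L-DP inequality $W(x_2,y)\le 2^\alpha W(x_1,y)$ over $\calY_-$, the left side is $\sum_{\calY_-}W(x_2,y)=1-b$ and the right side is $2^\alpha\sum_{\calY_-}W(x_1,y)=2^\alpha(1-a)$, so the correct constraint is $1-b\le 2^\alpha(1-a)$, not $1-a\le 2^\alpha(1-b)$ as you wrote. With the correct pair $a\le 2^\alpha b$ and $1-b\le 2^\alpha(1-a)$, both are upper bounds on $a$ for fixed $b$ (the second rearranges to $a\le 1-(1-b)/2^\alpha$), and maximizing $a-b$ over $b\in[0,1]$ gives the vertex $b=\tfrac{1}{2^\alpha+1}$, $a=\tfrac{2^\alpha}{2^\alpha+1}$ that you stated. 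Your claimed active system ``$a=2^\alpha b$ and $1-a=2^\alpha(1-b)$'' is inconsistent for $\alpha>0$; the solution you wrote down actually satisfies the corrected system, which is why you landed on the right answer.

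By contrast, the paper avoids the linear program entirely by working pointwise: it proves the elementary lemma that for each $y$,
\[
\frac{|W(x_1,y)-W(x_2,y)|}{W(x_1,y)+W(x_2,y)}\le\frac{2^\alpha-1}{2^\alpha+1},
\]
which is just a rearrangement of $W(x_1,y)\le 2^\alpha W(x_2,y)$, and then writes $\tfrac12\sum_y|W(x_1,y)-W(x_2,y)|$ as a weighted average against the weights $W(x_1,y)+W(x_2,y)$, which sum to $2$. Your aggregate two-variable argument and the paper's per-$y$ ratio bound are equivalent in strength here; the paper's route is shorter and sidesteps any optimization, while yours makes the extremal configuration (the binary randomized response) visible as the vertex of a polygon.
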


In \cite[Corollary~11]{kairouz2016extremal}, Kairouz et al. obtained Thm.~\ref{Thm:LDPDobrushinCoefficient} as a by-product of the characterization of the set of optimal mechanisms for a certain type of optimization problems under L-DP constraints. In this work, such a characterization is not required in order to establish \eqref{eq:LDPDobrushin}. This new approach provides an alternative to study privacy notions for which optimal mechanisms are not easily computable, for example, maximal leakage.

The following theorem provides a converse to Theorem~\ref{Thm:LDPDobrushinCoefficient}, i.e., provides an upper bound for the L-DP guarantee of a privacy mechanism in terms of its Dobrushin coefficient.

\begin{theorem}
\label{Thm:DobrushinCoefficientLDP}
For every privacy mechanism $W$,
\eqn{eq:DobrushinLDP}{\max_{y\in\calY} \max_{x_1,x_2\in\calX} \frac{W(x_1,y)}{W(x_2,y)} \leq 1+\frac{\eta_\text{TV}(W)}{W_*},}
where \dsty{W_* := \inf_{x\in\calX} \inf_{y\in\calY} W(x,y)}.
\end{theorem}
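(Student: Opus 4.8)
The plan is to reduce the whole statement to a single output symbol. Fix $y^{*}\in\calY$ and $x_{1},x_{2}\in\calX$, and suppose without loss of generality that $W(x_{1},y^{*})\geq W(x_{2},y^{*})$ (otherwise the corresponding ratio is at most one and the asserted bound is trivial). The first step is to note that, directly from the definition \eqref{eq:DobrushinExpression} of the Dobrushin coefficient together with \eqref{eq:DobrushinEtaTV2},
\[
\eta_{\text{TV}}(W)=\ol{\alpha}(W)\;\geq\;\frac{1}{2}\sum_{y\in\calY}\bigl|W(x_{1},y)-W(x_{2},y)\bigr|.
\]

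The key step is to extract a single coordinate \emph{without losing a factor of two}. Since $\sum_{y}W(x_{1},y)=\sum_{y}W(x_{2},y)=1$, the positive and negative parts of the signed vector $W(x_{1},\cdot)-W(x_{2},\cdot)$ carry equal total mass, whence
\[
\frac{1}{2}\sum_{y\in\calY}\bigl|W(x_{1},y)-W(x_{2},y)\bigr|=\sum_{y\,:\,W(x_{1},y)>W(x_{2},y)}\bigl(W(x_{1},y)-W(x_{2},y)\bigr)\;\geq\;W(x_{1},y^{*})-W(x_{2},y^{*}),
\]
the last inequality holding because $y^{*}$ lies in the index set on the left (or else $W(x_{1},y^{*})-W(x_{2},y^{*})=0$, in which case it is immediate). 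Combining the two displays gives $W(x_{1},y^{*})-W(x_{2},y^{*})\leq\eta_{\text{TV}}(W)$. Dividing by $W(x_{2},y^{*})$, and assuming for the moment $W_{*}>0$ so that $W(x_{2},y^{*})\geq W_{*}>0$,
\[
\frac{W(x_{1},y^{*})}{W(x_{2},y^{*})}=1+\frac{W(x_{1},y^{*})-W(x_{2},y^{*})}{W(x_{2},y^{*})}\leq 1+\frac{\eta_{\text{TV}}(W)}{W(x_{2},y^{*})}\leq 1+\frac{\eta_{\text{TV}}(W)}{W_{*}}.
\]
Taking the maximum over $y^{*}\in\calY$ and $x_{1},x_{2}\in\calX$ yields \eqref{eq:DobrushinLDP}. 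It remains to dispose of the degenerate case $W_{*}=0$: if $\eta_{\text{TV}}(W)>0$ the right-hand side of \eqref{eq:DobrushinLDP} is $+\infty$ and there is nothing to prove, while if $\eta_{\text{TV}}(W)=0$ then $W(x_{1},\cdot)=W(x_{2},\cdot)$ for all $x_{1},x_{2}$, so every ratio equals $1$ (using $0/0:=1$) and the right-hand side equals $1+0/0=2$.

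\textbf{Main obstacle.} There is no deep difficulty here; the subtlety is purely the factor of two. The crude estimate $\bigl|W(x_{1},y^{*})-W(x_{2},y^{*})\bigr|\leq\sum_{y}\bigl|W(x_{1},y)-W(x_{2},y)\bigr|=2\,\eta_{\text{TV}}(W)$ would only deliver $1+2\eta_{\text{TV}}(W)/W_{*}$, so one must exploit the cancellation arising from the fact that both rows of $W$ are probability vectors (equivalently, the identity $\|P-Q\|=\max_{A}\bigl(P(A)-Q(A)\bigr)$) to isolate the coordinate $y^{*}$ cleanly. The only remaining care is the bookkeeping around $W_{*}=0$ and the conventions $0/0:=1$ and $1/0:=\infty$.
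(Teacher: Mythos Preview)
Your proof is correct and follows essentially the same route as the paper: both arguments fix a pair $x_{1},x_{2}$ and a coordinate $y^{*}$, use the row-stochasticity of $W$ to get $W(x_{1},y^{*})-W(x_{2},y^{*})\leq\eta_{\text{TV}}(W)$ (rather than the crude $2\eta_{\text{TV}}(W)$), and then divide by $W(x_{2},y^{*})\geq W_{*}$. Your presentation is marginally cleaner in that you invoke the positive-part identity $\|P-Q\|=\sum_{y:P(y)>Q(y)}(P(y)-Q(y))$ directly, whereas the paper reaches the same conclusion by isolating $y_{0}$ and bounding $\sum_{y\neq y_{0}}|W(x_{2},y)-W(x_{1},y)|$ from below; you also treat the degenerate case $W_{*}=0$ explicitly, which the paper omits.
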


After some manipulations, Theorems~\ref{Thm:LDPDobrushinCoefficient} and \ref{Thm:DobrushinCoefficientLDP} imply that, for every privacy mechanism $W$,
\eq{\frac{2\eta_\text{TV}(W)}{1-\eta_\text{TV}(W)} \leq \max_{y\in\calY} \max_{x,x'\in\calX} \log\frac{W(x,y)}{W(x',y)} -1 \leq \frac{\eta_\text{TV}(W)}{W_*}.}
As shown in the sequel, both bounds can be tight. These inequalities show the connection between the local differential privacy guarantee and the contraction properties of a privacy mechanism.

\subsection{Maximal Leakage (MaxL) based Privacy}

The following theorem provides an upper bound for the Dobrushin coefficient of a privacy mechanism in terms of its MaxL privacy guarantee.

\begin{theorem}
\label{Thm:MLDobrushinCoefficient}
If a privacy mechanism $W$ is $\alpha$-MaxL private,
\eqn{eq:MaxLDobrushin}{\eta_\text{TV}(W) \leq \min\{1,2^\alpha-1\}.}
\end{theorem}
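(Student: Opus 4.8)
The plan is to reduce \eqref{eq:MaxLDobrushin} to the defining formula \eqref{eq:DobrushinExpression} for the Dobrushin coefficient together with an elementary identity for the total variation distance. Since $\eta_\text{TV}(W)=\ol{\alpha}(W)\le 1$ for every mapping $W$ (as noted just after Proposition~\ref{Proposition:DobrushinEtaTV}), it suffices to prove that $\alpha$-MaxL privacy implies $\ol{\alpha}(W)\le 2^\alpha-1$; the bound in \eqref{eq:MaxLDobrushin} then follows by taking the minimum.

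First I would fix an arbitrary pair $x_1,x_2\in\calX$ and regard the rows $W(x_1,\cdot)$ and $W(x_2,\cdot)$ as elements of $\calP(\calY)$. The key observation is that for any $P,Q\in\calP(\calZ)$ one has $\|P-Q\| = \sum_{z\in\calZ}\max\{P(z),Q(z)\}-1$, which is immediate from $|a-b| = a+b-2\min\{a,b\}$, $\max\{a,b\}+\min\{a,b\}=a+b$, and $\sum_z P(z)=\sum_z Q(z)=1$. Applying this to $P=W(x_1,\cdot)$ and $Q=W(x_2,\cdot)$ gives
\[
\frac{1}{2}\sum_{y\in\calY}|W(x_1,y)-W(x_2,y)| \;=\; \sum_{y\in\calY}\max\{W(x_1,y),W(x_2,y)\}-1.
\]

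Next I would bound $\max\{W(x_1,y),W(x_2,y)\}\le\max_{x\in\calX}W(x,y)$ termwise and sum over $y$, so the right-hand side above is at most $\sum_{y\in\calY}\max_{x\in\calX}W(x,y)-1\le 2^\alpha-1$ by the MaxL hypothesis \eqref{eq:DefMLPriv}. Since $x_1,x_2$ were arbitrary, maximizing over all pairs in \eqref{eq:DobrushinExpression} yields $\ol{\alpha}(W)\le 2^\alpha-1$, and combining with $\ol{\alpha}(W)\le 1$ gives the claim.

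There is no serious obstacle: once the total-variation identity is in hand the argument is a two-line computation, and the only point worth spelling out is that working directly from the two-point formulation \eqref{eq:DobrushinExpression} is exactly what lets us enlarge $\max\{W(x_1,y),W(x_2,y)\}$ to the full $\max_{x\in\calX}W(x,y)$ that appears in Definition~\ref{Def:MaxL}. (One could equivalently run the same identity on arbitrary $P_0,P_1\in\calP(\calX)$ via Proposition~\ref{Proposition:DobrushinEtaTV}, but the row-based argument is cleaner.)
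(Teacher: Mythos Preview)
Your proposal is correct and follows essentially the same route as the paper: both arguments reduce to the identity $\frac{1}{2}\sum_{y}|W(x_1,y)-W(x_2,y)|=\sum_{y}\max\{W(x_1,y),W(x_2,y)\}-1$, then bound the pairwise maximum by $\max_{x\in\calX}W(x,y)$ and invoke \eqref{eq:DefMLPriv}. The only cosmetic difference is that the paper derives the identity by splitting $\calY$ into $\calY_+$ and $\calY_-$, whereas you obtain it from $|a-b|=a+b-2\min\{a,b\}$; the content is identical.
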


The following theorem provides a converse to Theorem~\ref{Thm:MLDobrushinCoefficient}, i.e., provides an upper bound for the MaxL guarantee of a privacy mechanism in terms of its Dobrushin coefficient.

\begin{theorem}
\label{Thm:DobrushinCoefficientMLP}
For every privacy mechanism $W$,
\eqn{eq:DobrushinMaxL}{\sum_{y\in\calY} \max_{x\in\calX} W(x,y)\leq \frac{|\calX|}{2} (1+\eta_\text{TV}(W)).}
\end{theorem}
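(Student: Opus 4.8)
The plan is to reduce the claim to the two-element situation via the elementary identity $\max\{a,b\} = \tfrac12\big(a+b+|a-b|\big)$ applied to \emph{every} pair of rows of $W$, and then to read off the resulting absolute-difference terms as (twice) total variation distances, each bounded by $\eta_\text{TV}(W)$ through \eqref{eq:DobrushinExpression}. Throughout write $m := |\calX|$; we may assume $m \ge 2$ (the regime of interest for privacy), since the argument below divides by $m-1$.

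Fix $y \in \calY$ and let $x_y \in \calX$ attain $\max_{x\in\calX} W(x,y)$. First I would apply the identity above to each of the $\binom m2$ unordered pairs $\{x_1,x_2\}$ of distinct elements of $\calX$ and sum, using that each row index lies in exactly $m-1$ pairs:
\[
\sum_{\{x_1,x_2\}} \max\{W(x_1,y),W(x_2,y)\} = \frac{m-1}{2}\sum_{x\in\calX} W(x,y) + \frac12 \sum_{\{x_1,x_2\}} \big| W(x_1,y) - W(x_2,y) \big|.
\]
On the other hand, the $m-1$ pairs that contain $x_y$ each satisfy $\max\{W(x_y,y),W(x,y)\} = \max_{x}W(x,y)$, and the remaining pairs contribute nonnegatively, so the left-hand side is at least $(m-1)\max_{x}W(x,y)$. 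Combining these and dividing by $m-1$ gives, for every $y$,
\[
\max_{x\in\calX} W(x,y) \le \frac12 \sum_{x\in\calX} W(x,y) + \frac{1}{2(m-1)} \sum_{\{x_1,x_2\}} \big| W(x_1,y) - W(x_2,y) \big|.
\]

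Next I would sum this inequality over $y \in \calY$. The first term becomes $\tfrac12 \sum_{x}\sum_{y} W(x,y) = \tfrac m2$, since each row of $W$ sums to $1$. For the second term, swapping the order of summation, for each fixed pair $\{x_1,x_2\}$ we have $\tfrac12\sum_{y}|W(x_1,y)-W(x_2,y)| \le \ol{\alpha}(W) = \eta_\text{TV}(W)$ directly from \eqref{eq:DobrushinExpression} and \eqref{eq:DobrushinEtaTV2}; as there are $\binom m2 = \tfrac{m(m-1)}2$ pairs, the second term is at most $\tfrac{1}{2(m-1)}\cdot m(m-1)\,\eta_\text{TV}(W) = \tfrac m2\,\eta_\text{TV}(W)$. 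Adding the two contributions yields $\sum_{y}\max_{x}W(x,y) \le \tfrac m2\big(1+\eta_\text{TV}(W)\big)$, which is the assertion.

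The only place requiring care is the step $\sum_{\{x_1,x_2\}}\max\{W(x_1,y),W(x_2,y)\} \ge (m-1)\max_{x}W(x,y)$, which discards the contributions of all pairs not involving $x_y$; I expect this to be the main point to double-check, since a priori it looks wasteful. However, it is not lossy for the extremal mechanisms: for the identity channel $\calY=\calX$, $W(x,y)=\I{x=y}$ one has $\eta_\text{TV}(W)=1$ and equality holds throughout, so the bound $\tfrac m2(1+\eta_\text{TV}(W))$ is tight (matching the $m=2$ exact identity $\sum_y\max_x W(x,y)=1+\eta_\text{TV}(W)$). Beyond this observation, the argument is routine bookkeeping.
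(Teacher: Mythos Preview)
Your argument is correct and takes a genuinely different route from the paper's. The paper first partitions $\calY$ according to which $x$ attains the row-maximum, then invokes an averaging lemma (Lemma~\ref{Lemma:Means}) to extract a \emph{single} pair $x_1\neq x_2$ whose contributions over $\calY_{x_1}\cup\calY_{x_2}$ already account for at least a $2/|\calX|$ fraction of $\sum_y\max_x W(x,y)$; it then bounds $\sum_y\max\{W(x_1,y),W(x_2,y)\}$ by $1+\eta_\text{TV}(W)$. You instead symmetrize over \emph{all} $\binom m2$ pairs via $\max\{a,b\}=\tfrac12(a+b+|a-b|)$, so that the linear part produces $m/2$ after summing in $y$ and the absolute-difference part is controlled pairwise by $2\eta_\text{TV}(W)$. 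Your route is slightly more elementary (no auxiliary pigeonhole lemma) and even yields the pointwise-in-$y$ inequality $\max_x W(x,y)\le\tfrac12\sum_x W(x,y)+\tfrac{1}{2(m-1)}\sum_{\{x_1,x_2\}}|W(x_1,y)-W(x_2,y)|$, which may be of independent use; the paper's route, by singling out an extremal pair, is closer to what one would want for analyzing equality cases. Your caveat that $m\ge2$ is needed is apt; the paper's proof has the same implicit restriction via Lemma~\ref{Lemma:Means}.
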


In particular, Theorems~\ref{Thm:MLDobrushinCoefficient} and \ref{Thm:DobrushinCoefficientMLP} imply that,
\eqn{eq:ConclusionMaxL}{1+\eta_\text{TV}(W) \leq \sum_{y\in\calY} \max_{x\in\calX} W(x,y) \leq \frac{|\calX|}{2} (1+\eta_\text{TV}(W)),}
for every privacy mechanism $W$. Note that when $|\calX| = 2$, \eqref{eq:ConclusionMaxL} reduces to an equality. The inequalities in \eqref{eq:ConclusionMaxL} show the intrinsic relation between the MaxL privacy guarantee of a privacy mechanism and its Dobrushin coefficient.

\section{Illustration}
\label{Section:Illustration}

\subsection{Local Differential Privacy}
\label{Section:IllustrationLDP}

Assume that $\calX=\calY=\{1,\ldots,k\}$. Consider the randomized response mechanism $W_{k,\alpha}:\calP(\calX)\to\calP(\calY)$ defined by
\eq{W_{k,\alpha}(x,y) = \frac{2^\alpha-1}{2^\alpha+k-1} \mathbbm{1}_{\{x=y\}} + \frac{1}{2^\alpha+k-1}.}
Note that $W_{k,\alpha}$ is $\alpha$-locally differentially private. A straightforward computation shows that, for all $P_0,P_1\in\calP(\calX)$,
\eqn{eq:etaWkalpha}{\|W_{k,\alpha}(P_0) - W_{k,\alpha}(P_1)\| = \frac{2^\alpha-1}{2^\alpha+k-1} \|P_0 - P_1\|,}
and thus \dsty{\eta_{\text{TV}}(W_{k,\alpha}) = \frac{2^\alpha-1}{2^\alpha+k-1}}. The fact that \eqref{eq:etaWkalpha} holds for all $P_0,P_1\in\calP(\calX)$ shows that the randomized response mechanism contracts the space $\calP(\calX)$ in a uniform way. This is a desirable property in terms of statistical utility. For the binary setting,
\eq{\eta_{\text{TV}}(W_{2,\alpha}) = \frac{2^\alpha-1}{2^\alpha+1},}
thus illustrating a case where the bound in Theorem~\ref{Thm:LDPDobrushinCoefficient} is tight. Also, note that $(W_{2,\alpha})_* = (2^\alpha+1)^{-1}$. Thus implying that the bound in Theorem~\ref{Thm:DobrushinCoefficientLDP} is also tight.

\subsection{Maximal Leakage Privacy}
\label{Section:IllustrationMaxL}

Let $\calX=\calY=\{0,1\}$. For $\alpha\leq1$, put $\zeta_\alpha=2-2^\alpha$. Consider the privacy mechanism
\eq{Z_\alpha := \left(\begin{matrix} 1 - \zeta_\alpha & \zeta_\alpha \\ 0 & 1 \end{matrix}\right).}
Clearly, $Z_\alpha$ is $\alpha$-MaxL private. Furthermore, a straightforward computation shows that, for all $P_0,P_1\in\calP(\{0,1\})$,
\eq{\|Z_\alpha(P_0) - Z_\alpha(P_1)\| = (1-\zeta_\alpha) \|P_0 - P_1\|.}
Hence, $\eta_{\text{TV}}(Z_\alpha) = 1- \zeta_\alpha = 2^\alpha -1$. This type of {\it $Z$-channel mechanism} has proved to be optimal for some problems under maximal leakage constraints; see, for example, Liao et al. \cite{liao2017hypothesis} in the context of hypothesis testing.

\section{Maximal Leakage $\ell_2$-Minimax Risk}
\label{Section:Minimax}

In this section we analyze the maximal leakage $\ell_2$-minimax risk, adapting the notion of local differential private minimax risk in \cite{duchi2017minimax}. In particular, we provide lower and upper bounds for the $\alpha$-MaxL $\ell_2$-minimax risk with matching orders with respect to the sample size and privacy level. A detailed treatment of the non-private version of this problem can be found in \cite{kamath2015learning}.

For probability distributions $P,Q\in\calP(\calZ)$, we let $\|P - Q\|_2$ be their $\ell_2$-distance, i.e.,
\eq{\|P - Q\|_2^2 = \sum_{z\in\calZ} (P(z)-Q(z))^2.}
Let $\Delta_k$ be the set of distributions over $[k]:=\{1,\ldots,k\}$. By definition, the $\ell_2$-minimax risk of a privacy mechanism $W:\Delta_k \to \Delta_m$ is given by
\eqn{eq:DefMinimaxW}{r_{k,n}^{\|\cdot\|_2^2}(W) := \inf_{\hat{P}} \sup_{P\in\Delta_k} \Ex{W,P}{\|\hat{P}(Y^n) - P\|_2^2},}
where the infimum is over all the estimators $\hat{P}:[m]^n\to\Delta_k$ and $\Ex{W,P}{Z}$ denotes expectation of $Z$ when $Y_1,\ldots,Y_n$ are i.i.d. with distribution $W(P)$. For $\alpha\in[0,\infty]$, we define the $\alpha$-MaxL $\ell_2$-minimax risk as
\eqn{eq:DefMinimax}{r_{\alpha,k,n}^{\|\cdot\|_2^2} = \inf_{W\in\calD_\alpha} r_{k,n}^{\|\cdot\|_2^2}(W) ,}
where $\calD_\alpha$ denotes the set of all privacy mechanisms which are $\alpha$-MaxL private. The next proposition provides a lower bound for $r_{\alpha,k,n}^{\|\cdot\|_2^2}$. Our proof, which is provided at the end of this section, relies on Le Cam's method, as in Ye and Barg \cite[pp.~26--27]{ye2017asymptotically}, and \eqref{eq:DPIKLMaxL}.

\begin{proposition}
\label{Prop:DistributionEstimationMinimaxMaxL}
Let $k\in\N$ and $\alpha>0$ be given. There exists $N=N(k,\alpha)$ such that, for all $n>N$,
\eqn{eq:LBl2}{r_{\alpha,k,n}^{\|\cdot\|_2^2} \geq \frac{1}{16n(2^\alpha-1)}.}
\end{proposition}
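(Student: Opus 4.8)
The plan is to use Le Cam's two-point method combined with the strong data processing inequality for KL-divergence implied by \eqref{eq:DPIKLMaxL}. First I would set up a well-chosen two-point family: pick a base distribution (the uniform distribution $P_0$ on $[k]$, or a pair of perturbations of it) and a perturbed distribution $P_1$ at $\ell_2$-distance roughly $\delta$, so that $\|P_0 - P_1\|_2^2 \asymp \delta^2$. Following Ye and Barg, the standard Le Cam bound gives
\eqn{}{r_{k,n}^{\|\cdot\|_2^2}(W) \geq \frac{1}{4}\|P_0 - P_1\|_2^2 \left(1 - \|W(P_0)^{\otimes n} - W(P_1)^{\otimes n}\|\right),}
and then via Pinsker's inequality and tensorization of KL-divergence,
\eqn{}{\|W(P_0)^{\otimes n} - W(P_1)^{\otimes n}\|^2 \leq \frac{n}{2} D_{\text{KL}}(W(P_0)\|W(P_1)).}

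Next, the key step: apply \eqref{eq:DPIKLMaxL}, which says $D_{\text{KL}}(W(P_0)\|W(P_1)) \leq \eta_{\text{TV}}(W) D_{\text{KL}}(P_0\|P_1)$, and then Theorem~\ref{Thm:MLDobrushinCoefficient}, which gives $\eta_{\text{TV}}(W) \leq 2^\alpha - 1$ for any $W \in \calD_\alpha$. So $D_{\text{KL}}(W(P_0)\|W(P_1)) \leq (2^\alpha-1) D_{\text{KL}}(P_0\|P_1)$ uniformly over $\calD_\alpha$. Since $D_{\text{KL}}(P_0\|P_1)$ for two distributions at $\ell_2$-distance $\delta$ from each other near uniform is of order $k\delta^2$ (a second-order Taylor estimate of KL around a nearby point), chaining the inequalities yields $\|W(P_0)^{\otimes n} - W(P_1)^{\otimes n}\| \lesssim \sqrt{n (2^\alpha-1) k \delta^2}$. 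Choosing $\delta^2 \asymp \frac{1}{n(2^\alpha-1)k}$ makes this bounded by, say, $1/2$, and plugging back into the Le Cam bound gives $r_{\alpha,k,n}^{\|\cdot\|_2^2} \gtrsim \delta^2 \asymp \frac{1}{n(2^\alpha-1)k}$. To get the clean constant $\frac{1}{16 n(2^\alpha-1)}$ stated — in particular with no $k$ in the denominator — one presumably works in a low-dimensional sub-family (effectively $k=2$, embedding a binary problem into $\Delta_k$ by perturbing only two coordinates) so that the KL estimate is $D_{\text{KL}}(P_0\|P_1) \asymp \delta^2$ with an absolute constant; the role of $k$ is then only to guarantee $k \geq 2$, and $N(k,\alpha)$ absorbs the "$n$ large enough" needed for the Taylor/Pinsker estimates to be valid with the claimed constants.

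The main obstacle is bookkeeping the constants carefully enough to land exactly on $1/16$: one must track the $1/4$ (or $1/2$) from Le Cam, the $n/2$ from tensorized Pinsker, the precise second-order constant in $D_{\text{KL}}(P_0\|P_1) \le c\,\|P_0-P_1\|_2^2$ (valid only once the perturbation is small, hence the threshold $N$), and the choice of $\delta$. A secondary subtlety is that \eqref{eq:DPIKLMaxL} combined with Theorem~\ref{Thm:MLDobrushinCoefficient} must be applied \emph{before} taking the infimum over $W \in \calD_\alpha$ — the bound $\eta_{\text{TV}}(W) \le 2^\alpha-1$ holds for every such $W$, so the resulting lower bound on $r_{k,n}^{\|\cdot\|_2^2}(W)$ is uniform in $W \in \calD_\alpha$ and hence passes to the infimum. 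Everything else is routine: the construction of the two-point family, the Taylor expansion of KL, and the final optimization over $\delta$.
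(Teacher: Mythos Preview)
Your proposal is correct and follows essentially the same route as the paper: a Le Cam two-point argument around the uniform distribution, Pinsker plus tensorization to pass to KL, then the SDPI \eqref{eq:DPIKLMaxL} combined with Theorem~\ref{Thm:MLDobrushinCoefficient} to bound $D_{\text{KL}}(W(P_0)\|W(P_1))$ by $(2^\alpha-1)D_{\text{KL}}(P_0\|P_1)$ uniformly over $\calD_\alpha$, and a Taylor estimate on the KL to close. The paper fixes the separation to $\|P_0-P_1\|_2^2 = 1/(n(2^\alpha-1))$ up front (via a unit-norm perturbation vector $u$) rather than optimizing a free $\delta$, but otherwise the argument and the bookkeeping of constants are exactly as you describe; your remark about restricting to a two-coordinate perturbation to keep $k$ out of the denominator is a point the paper leaves implicit in its Taylor step.
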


By providing an achievability scheme, we provide an upper bound for $r_{\alpha,k,n}^{\|\cdot\|_2^2}$.

\begin{proposition}
\label{Prop:DistributionEstimationMinimaxMaxL2}
Let $k\in\N$ and $\alpha>0$ be given. If $2^\alpha\leq k$, then, for all $n\in\N$,
\eqn{eq:UBl2}{r_{\alpha,k,n}^{\|\cdot\|_2^2} \leq \frac{k-1}{n(2^\alpha-1)}.}
\end{proposition}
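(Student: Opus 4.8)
The plan is to exhibit an explicit $\alpha$-MaxL private mechanism $W:\Delta_k\to\Delta_m$ together with an estimator $\hat P$ achieving the claimed rate. The natural candidate, given the illustration in Section~\ref{Section:IllustrationMaxL}, is a ``$Z$-channel style'' mechanism that reports the true symbol with a fixed probability and otherwise outputs a dedicated erasure symbol; concretely, take $\calY = [k]\cup\{\bot\}$ (so $m=k+1$) and set
\eqn{}{W(x,y) = (2^\alpha-1)\,\mathbbm{1}_{\{y=x\}} + (2-2^\alpha)\,\mathbbm{1}_{\{y=\bot\}},}
which requires $2^\alpha\le 2$; for general $2^\alpha\le k$ one instead groups the symbols or, more simply, uses the mechanism that outputs $x$ with probability $p:=(2^\alpha-1)/(k-1)$ per coordinate... but the cleanest choice is: output the true symbol $x$ with probability $\rho$ and output a uniformly random symbol in $[k]$ with probability $1-\rho$, choosing $\rho$ so that $\sum_y \max_x W(x,y) = \rho + (1-\rho) = $ — that does not work, so I will stick with the erasure design, splitting the ``mass to hide'' appropriately so that $\sum_{y\in\calY}\max_x W(x,y) = (2^\alpha-1)\cdot 1 \cdot(\text{from the }k\text{ diagonal-ish entries, but only one is the max per }y) + (\text{erasure column})$. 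The key design constraint is precisely Definition~\ref{Def:MaxL}: $\sum_{y}\max_x W(x,y)\le 2^\alpha$, and one checks the erasure mechanism attains it with equality when $2^\alpha\le k$ (the condition ensures the erasure probability $1 - (2^\alpha-1)/(\text{something})$ stays nonnegative).

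Next I would analyze the estimator. Given $Y_1,\dots,Y_n$ i.i.d.\ from $W(P)$, each $Y_i$ equals the (unobserved) true draw $X_i\sim P$ with probability $\rho = 2^\alpha-1$ (scaled appropriately) and is an erasure otherwise. The natural estimator is the empirical distribution of the \emph{non-erased} samples: $\hat P(j) = \#\{i: Y_i = j\}/\#\{i: Y_i\neq\bot\}$, or, to avoid conditioning issues, the debiased estimator $\hat P(j) = \frac{1}{n\rho}\#\{i: Y_i=j\}$ (then project onto $\Delta_k$, which only helps the $\ell_2$ risk). For the debiased estimator, $\mathbb E[\hat P(j)] = P(j)$ and $\Var(\hat P(j)) = \frac{1}{n\rho^2}\Var(\mathbbm{1}_{\{Y_1=j\}}) = \frac{1}{n\rho^2}\rho P(j)(1-\rho P(j)) \le \frac{P(j)}{n\rho}$. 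Summing over $j\in[k]$ gives $\E{\|\hat P - P\|_2^2} = \sum_j \Var(\hat P(j)) \le \frac{1}{n\rho}\sum_j P(j) = \frac{1}{n\rho} = \frac{1}{n(2^\alpha-1)}$ — which is already within a constant of the target; the factor $k-1$ in \eqref{eq:UBl2} presumably comes from the version of the mechanism where the per-symbol retention probability is $\rho = (2^\alpha-1)/(k-1)$, forced by the MaxL budget when one must output one of the $k$ original symbols rather than a fresh erasure symbol. So the main bookkeeping step is matching the mechanism's retention probability to the MaxL constraint and then the variance computation above scales accordingly to $\frac{k-1}{n(2^\alpha-1)}$.

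The steps in order: (i) define the mechanism explicitly and verify $W\in\calD_\alpha$ via Definition~\ref{Def:MaxL}, using $2^\alpha\le k$ to ensure the parameters are valid probabilities; (ii) define the debiased empirical estimator and note projection onto $\Delta_k$ is $\ell_2$-nonexpansive so it can only reduce the risk; (iii) compute $\E{\|\hat P-P\|_2^2}$ coordinatewise as a sum of variances of scaled Binomials, bound each by $P(j)/(n\rho)$, and sum; (iv) take the supremum over $P\in\Delta_k$ (the bound $\sum_j P(j)=1$ is already $P$-free, so the sup is immediate) and then the infimum over $W\in\calD_\alpha$ is bounded by this single mechanism's value. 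I expect the only genuine obstacle is step (i): pinning down the \emph{right} mechanism so that the MaxL budget is spent optimally and the resulting retention probability is exactly $(2^\alpha-1)/(k-1)$ — i.e.\ confirming that the $k$-ary analogue of the $Z$-channel (rather than a mechanism with an extra output symbol) is what yields the stated constant, and checking the edge behavior when $2^\alpha$ is close to $k$. Everything after that is the routine variance calculation sketched above.
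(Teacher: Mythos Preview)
Your approach is essentially the paper's: the paper uses exactly the erasure mechanism $W:\Delta_k\to\Delta_{k+1}$ that outputs the true symbol with probability $\lambda=(2^\alpha-1)/(k-1)$ and a dedicated $(k{+}1)$-th (erasure) symbol with probability $1-\lambda$, together with the debiased empirical estimator $\hat P(x)=\frac{1}{n\lambda}\sum_{i}\I{Y_i=x}$; steps (iii)--(iv) of your sketch are then carried out verbatim.

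The one place you go wrong is the MaxL budget check in step~(i), and this is what leads you to the incorrect guess that the factor $k-1$ comes from \emph{dropping} the erasure symbol. For the erasure mechanism with retention probability $\rho$ and output alphabet $[k]\cup\{\bot\}$, every non-erasure symbol $y\in[k]$ contributes $\max_x W(x,y)=\rho$ (there are $k$ such columns, not one), and the erasure column contributes $1-\rho$, so
\eq{\sum_{y\in\calY}\max_{x\in\calX}W(x,y)=k\rho+(1-\rho)=1+(k-1)\rho.}
Setting this $\le 2^\alpha$ forces $\rho\le(2^\alpha-1)/(k-1)$ \emph{even with} the extra output symbol; your tentative choice $\rho=2^\alpha-1$ is not $\alpha$-MaxL private once $k>2$. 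The hypothesis $2^\alpha\le k$ is exactly the condition $\lambda\le1$. Plugging $\rho=\lambda$ into your variance bound $\sum_j\Var{\hat P(j)}\le 1/(n\rho)$ gives precisely $(k-1)/(n(2^\alpha-1))$, and no projection onto $\Delta_k$ is needed to obtain the stated inequality.
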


Note that when $2^\alpha > k$, the constraint in Def.~\ref{Def:MaxL} becomes vacuous. By Propositions~\ref{Prop:DistributionEstimationMinimaxMaxL} and \ref{Prop:DistributionEstimationMinimaxMaxL2},
\eq{r_{\alpha,k,n}^{\|\cdot\|_2^2} = \Theta_k\left(\frac{1}{n(2^\alpha-1)}\right).}
The notation $f(n,\alpha) = \Theta_k(g(n,\alpha))$ denotes that there exists $N=N(k,\alpha)$ such that, for all $n>N$,
\eq{C_1 g(n,\alpha) \leq f(n,\alpha) \leq C_2 g(n,\alpha),}
where $C_1,C_2>0$ are constants depending only on $k$. In particular, the MaxL $\ell_2$-minimax risk is smaller than its L-DP counterpart, see \cite[Thm.~II.5]{ye2017asymptotically},
\eq{r_{\alpha,k,n}^{\|\cdot\|_2^2} = \Theta_k\left(\frac{2^\alpha}{n(2^\alpha-1)^2}\right).}

\begin{proof}[\bf Proof of Proposition~\ref{Prop:DistributionEstimationMinimaxMaxL}]
Let $P_0\in\Delta_k$ be the uniform distribution over $[k$], i.e., $P_0(x) = k^{-1}$ for all $x\in[k]$. For a given vector $u\in\R^k$ such that $\sum_x u_x = 0$ and $\sum_x u_x^2 = 1$, let $P_1\in\Delta_k$ be the distribution determined by
\eq{\phantom{x\in\calX.} \quad P_1(x) = \frac{1}{k} + \frac{u_x}{\sqrt{n(2^\alpha-1)}}, \quad x\in[k].}
Note that if $n\geq k^2/(2^\alpha-1)$, then $P_1$ indeed defines a probability distribution. A direct computation shows that
\eqn{eq:ProofMinimaxDistanceP01}{\| P_0 - P_1 \|_2 = \frac{1}{\sqrt{n(2^\alpha-1)}}.}

Fix an $\alpha$-MaxL private mechanism $W:\Delta_k\to\Delta_m$. For a given estimator $\hat{P}$, let 
\eq{S(\hat{P}) := \frac{1}{2} \sum_{i=0,1} \Ex{W,P_i}{\|\hat{P}(Y^n) - P_i\|_2^2}.}
Let \dsty{\calK_{\hat{P}} := \{y^n\in[m]^n : \|\hat{P}(y^n) - P_0\|_2 \geq \|\hat{P}(y^n) - P_1\|_2\}}. By \eqref{eq:ProofMinimaxDistanceP01} and the triangle inequality, for $y^n\in\calK_{\hat{P}}$,
\eq{\|\hat{P}(y^n) - P_0\|_2 \geq \frac{1}{2} \| P_0 - P_1 \|_2 = \frac{1}{2\sqrt{n(2^\alpha-1)}},}
and, for $y^n\in\calK_{\hat{P}}^c$,
\eq{\|\hat{P}(y^n) - P_1\|_2 \geq \frac{1}{2} \| P_0 - P_1 \|_2 = \frac{1}{2\sqrt{n(2^\alpha-1)}}.}
Also, observe that
\al{
\Ex{W,P_0}{\|\hat{P}(Y^n) - P_0\|_2^2} &\geq \Ex{W,P_0}{\|\hat{P}(Y^n) - P_0\|_2^2 \I{\calK_{\hat{P}}}}\\
&\geq \frac{1}{4n(2^\alpha-1)} \Ex{W,P_0}{\I{\calK_{\hat{P}}}}\\
&= \frac{1}{4n(2^\alpha-1)} W(P_0)^n(\calK_{\hat{P}}),
}
where $W(P_0)^n(\calK_{\hat{P}})$ denotes the measure of $\calK_{\hat{P}}$ with respect to the $n$-fold tensor product measure $W(P_0) \otimes \cdots \otimes W(P_0)$. Using a similar argument, we obtain that
\eq{S(\hat{P}) \geq \frac{1}{8n(2^\alpha-1)} \big[W(P_0)^n(\calK_{\hat{P}}) + W(P_1)^n(\calK_{\hat{P}}^c)\big].}
Recall that $P(E)+Q(E^c) \geq 1 - \|P - Q\|$ for every event $E$. In particular,
\al{
S(\hat{P}) &\geq \frac{1}{8n(2^\alpha-1)}(1 - \|W(P_0)^n - W(P_1)^n\|)\\
&\geq \frac{1}{8n(2^\alpha-1)} \left(1 - \sqrt{\frac{n}{2} D_{\text{KL}}(W(P_1) \| W(P_0))}\right),
}
where the last inequality follows from Pinsker's inequality and the tensorization property of the KL-divergence. By \eqref{eq:DPIKLMaxL} and Theorem~\ref{Thm:MLDobrushinCoefficient}, we obtain that
\eq{S(\hat{P}) \geq \frac{1}{8n(2^\alpha-1)} \left(1 - \sqrt{\frac{n(2^\alpha-1)}{2} D_{\text{KL}}(P_1 \| P_0)}\right).}
A Taylor series expansion argument shows that for $n$ large enough $n(2^\alpha-1)D_{\text{KL}}(P_1 \| P_0) \leq 1$, and hence
\eqn{eq:ProofMinimaxFinalS}{S(\hat{P}) \geq \frac{1}{16n(2^\alpha-1)}.}
Note that \dsty{\sup_{P\in\Delta_k} \Ex{W,P}{\|\hat{P}(Y^n) - P\|_2^2} \geq S(\hat{P})}, thus
\eq{r_{k,n}^{\|\cdot\|_2^2}(W) = \inf_{\hat{P}} \sup_{P\in\Delta_k} \Ex{W,P}{\|\hat{P}(Y^n) - P\|_2^2} \geq \inf_{\hat{P}} S(\hat{P}).}
By \eqref{eq:ProofMinimaxFinalS}, we obtain that
\eq{r_{k,n}^{\|\cdot\|_2^2}(W) \geq \frac{1}{16n(2^\alpha-1)}.}
Since this inequality holds for any given $\alpha$-MaxL private mechanism $W$, the result follows.
\end{proof}

\begin{proof}[\bf Proof of Proposition~\ref{Prop:DistributionEstimationMinimaxMaxL2}]
Let \dsty{\lambda := \frac{2^\alpha-1}{k-1}}. Consider the mapping $W:\Delta_k \to \Delta_{k+1}$ given by
\eq{W = \left(\begin{matrix} \lambda & & & 1- \lambda\\ & \ddots & & \vdots \\& &  \lambda & 1-\lambda \end{matrix}\right).}
It is immediate to verify that $W$ is $\alpha$-MaxL private. Hence, \eqref{eq:DefMinimax} readily implies that
\eqn{eq:ProofAchievabilityInq1}{r_{\alpha,k,n}^{\|\cdot\|_2^2} \leq r_{k,n}^{\|\cdot\|_2^2}(W).}
Also, let $\hat{P}$ be the estimator determined by
\eq{\phantom{x\in[k].} \quad \quad \hat{P}(x) = \frac{k-1}{2^\alpha-1} \frac{1}{n} \sum_{i=1}^n \I{Y_i=x}, \quad \quad x\in[k].}
Hence, by \eqref{eq:DefMinimaxW} and \eqref{eq:ProofAchievabilityInq1},
\eqn{eq:ProofAchievabilityInq2}{r_{\alpha,k,n}^{\|\cdot\|_2^2} \leq \sup_{P\in\Delta_k} \Ex{W,P}{\|\hat{P}(Y^n) - P\|_2^2}.}
We now estimate the right hand side term of \eqref{eq:ProofAchievabilityInq2}.

For a given $P\in\Delta_k$, we let $Q = W(P)$ be the common distribution of $Y_1,\ldots,Y_n$. Note that $Q(x) = \lambda P(x)$ for all $x\in[k]$. In particular,
\eq{\|\hat{P}(Y^n) - P\|_2^2 = \frac{1}{(n \lambda)^2} \sum_{x\in[k]} \left(\sum_{i=1}^n \big[\I{Y_i=x} - Q(x)\big]\right)^2.}
Using the fact that $Q(x) = \lambda P(x)$ for all $x\in[k]$, we obtain
\al{
\E{\|\hat{P}(Y^n) - P\|_2^2} &= \frac{1}{n\lambda^2} \sum_{x\in[k]} Q(x)(1-Q(x))\\
&= \frac{1}{n\lambda} \sum_{x\in[k]} P(x)(1-\lambda P(x)).
}
Since $\sum_{x} P(x)(1-\lambda P(x)) \leq 1$, we obtain that
\eqn{eq:ProofAchievabilityInq3}{\E{\|\hat{P}(Y^n) - P\|_2^2}  \leq \frac{1}{n\lambda} = \frac{k-1}{n(2^\alpha-1)}}
Since \eqref{eq:ProofAchievabilityInq3} holds for every $P\in\Delta_k$, by \eqref{eq:ProofAchievabilityInq2} the result follows.
\end{proof}

\section{Concluding Remarks}

We have introduced a novel way to compare the statistical costs of any privacy mechanism via the Dobrushin coefficient. A significant advantage of this approach is that it eliminates the need to compute the precise mechanism for any privacy definition, which is often times difficult to obtain in closed form. Many questions remain to be addressed including tighter bounds for distribution estimation under MaxL constraints as well as application to other statistical problems with different privacy requirements.

\section*{Acknowledgement}

The authors would like to thank Dr. Ibrahim Issa for many useful discussions at the earlier stages of this work.

\appendices
\section{Proofs of the Main Results}
\label{Section:Proofs}

\subsection{Local Differential Privacy Results}

The proof of Theorem~\ref{Thm:LDPDobrushinCoefficient} is based on the following elementary observation.

\begin{lemma}
\label{Lemma:LDP}
If a privacy mechanism $W$ is $\alpha$-locally differentially private, then, for all $x_1,x_2\in\calX$ and $y\in\calY$,
\eqn{eq:InequalityDifferenceMaxMin}{\frac{|W(x_1,y)-W(x_2,y)|}{W(x_1,y) + W(x_2,y)} \leq \frac{2^\alpha - 1}{2^\alpha + 1}.}
\end{lemma}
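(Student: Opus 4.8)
The plan is to reduce \eqref{eq:InequalityDifferenceMaxMin} to an elementary statement about two nonnegative reals and exploit the monotonicity of the map $t\mapsto 1-\frac{2}{t+1}$. Fix $x_1,x_2\in\calX$ and $y\in\calY$, and set $a:=W(x_1,y)$ and $b:=W(x_2,y)$. The left-hand side of \eqref{eq:InequalityDifferenceMaxMin} is invariant under swapping $x_1$ and $x_2$, so I may assume without loss of generality that $a\ge b\ge 0$.

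Next I would dispose of the degenerate cases. If $b=0$ and $\alpha<\infty$, then $\alpha$-local differential privacy forces $a=0$ as well, since otherwise the ratio $W(x_1,y)/W(x_2,y)=a/0$ equals $\infty>2^\alpha$; in that case $|W(x_1,y)-W(x_2,y)|=0$, and the bound holds in the cross-multiplied form $|a-b|\le\frac{2^\alpha-1}{2^\alpha+1}(a+b)$, which is the form that is actually needed in the proof of Theorem~\ref{Thm:LDPDobrushinCoefficient}. If $\alpha=\infty$, the right-hand side equals $1$ and the inequality is trivial because $|a-b|\le a+b$ always.

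It remains to treat the main case $b>0$. Then $t:=a/b$ satisfies $1\le t\le 2^\alpha$ by the definition of $\alpha$-local differential privacy. Writing
$$\frac{a-b}{a+b}=1-\frac{2b}{a+b}=1-\frac{2}{t+1},$$
and noting that $t\mapsto 1-\frac{2}{t+1}$ is increasing on $[1,\infty)$, the bound $t\le 2^\alpha$ yields
$$\frac{a-b}{a+b}\le 1-\frac{2}{2^\alpha+1}=\frac{2^\alpha-1}{2^\alpha+1}.$$
Since $a\ge b$, the left-hand side equals $|W(x_1,y)-W(x_2,y)|/\big(W(x_1,y)+W(x_2,y)\big)$, which is exactly \eqref{eq:InequalityDifferenceMaxMin}.

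There is no real obstacle here — the statement is genuinely elementary, as the text already indicates. The only point requiring care is the bookkeeping around the conventions $0/0:=1$, $1/0:=\infty$ and the value $\alpha=\infty$; the cleanest route, and the one compatible with the subsequent use of the lemma, is to observe that \eqref{eq:InequalityDifferenceMaxMin} is equivalent to $|W(x_1,y)-W(x_2,y)|\le\frac{2^\alpha-1}{2^\alpha+1}\big(W(x_1,y)+W(x_2,y)\big)$, which then sums over $y\in\calY$ (each row of $W$ summing to $1$) to give $\eta_\text{TV}(W)=\ol\alpha(W)\le\frac{2^\alpha-1}{2^\alpha+1}$.
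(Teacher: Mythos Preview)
Your proof is correct and follows essentially the same approach as the paper: assume without loss of generality $W(x_1,y)\ge W(x_2,y)$ and reduce \eqref{eq:InequalityDifferenceMaxMin} to the $\alpha$-LDP condition $W(x_1,y)\le 2^\alpha W(x_2,y)$. The paper does this by direct cross-multiplication, whereas you phrase it via the monotonicity of $t\mapsto 1-\tfrac{2}{t+1}$ and handle the edge cases $b=0$ and $\alpha=\infty$ explicitly; these are cosmetic differences only.
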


\begin{proof}
Without loss of generality, assume that $W(x_1,y) \geq W(x_2,y)$. Note that \eqref{eq:InequalityDifferenceMaxMin} holds true if and only if
\eq{\text{\footnotesize $(2^\alpha + 1) (W(x_1,y) - W(x_2,y)) \leq (2^\alpha-1) (W(x_1,y) + W(x_2,y))$}.}
As the later holds if and only if \dsty{W(x_1,y) \leq 2^\alpha W(x_2,y)}, and $W$ is $\alpha$-locally differentially private, the result follows.
\end{proof}

\begin{proof}[\bf Proof of Theorem~\ref{Thm:LDPDobrushinCoefficient}]
For $x_1,x_2\in\calX$, we define
\eq{S(x_1,x_2) := \frac{1}{2} \sum_{y\in\calY} |W(x_1,y) - W(x_2,y)|.}
Note that for every $x_1,x_2\in\calX$, $S(x_1,x_2)$ equals
\eq{\frac{1}{2} \sum_{y\in\calY} \frac{|W(x_1,y)-W(x_2,y)|}{W(x_1,y) + W(x_2,y)} \big[W(x_1,y) + W(x_2,y)\big].}
By the Lemma~\ref{Lemma:LDP}, we have that
\al{
S(x_1,x_2) &\leq \frac{2^\alpha-1}{2(2^\alpha+1)} \sum_{y\in\calY} \big[W(x_1,y) + W(x_2,y)\big]\\
&= \frac{2^\alpha - 1}{2^\alpha + 1}.
}
By \eqref{eq:DobrushinExpression}, we obtain that \dsty{\eta_\text{TV}(W) \leq \frac{2^\alpha - 1}{2^\alpha + 1}}, as required.
\end{proof}

\begin{proof}[\bf Proof of Theorem~\ref{Thm:DobrushinCoefficientLDP}]
Without loss of generality, assume that
\eq{\max_{y\in\calY} \max_{x,x'\in\calX} \log\frac{W(x,y)}{W(x',y)} = \frac{W(x_2,y_0)}{W(x_1,y_0)}}
for some $x_1,x_2\in\calX$ and $y_0\in\calY$. For ease of notation, let $\eta = \eta_{\text{TV}}(W)$. By \eqref{eq:DobrushinExpression},
\al{
\eta &= \max_{x,x'} \frac{1}{2} \sum_{y\in\calY} |W(x,y) - W(x',y)|\\
&\geq \frac{1}{2} \sum_{y\in\calY} |W(x_2,y) - W(x_1,y)|.
}
In particular, we have that
\eq{W(x_2,y_0) - W(x_1,y_0) \leq 2\eta - \sum_{y\neq y_0} |W(x_2,y) - W(x_1,y)|.}
An elementary computation shows that
\al{
\sum_{y\neq y_0} |W(x_2,y) - W(x_1,y)| &\geq \sum_{y\neq y_0} W(x_1,y) - W(x_2,y)\\
&= W(x_2,y_0) - W(x_1,y_0),
}
and hence
\eq{W(x_2,y_0) - W(x_1,y_0) \leq 2\eta - (W(x_2,y) - W(x_1,y)),}
i.e., $W(x_2,y_0) - W(x_1,y_0) \leq \eta$. Therefore,
\eq{\frac{W(x_2,y_0)}{W(x_1,y_0)} \leq 1 + \frac{\eta_{\text{TV}}(W)}{W(x_1,y_0)} \leq 1 + \frac{\eta_{\text{TV}}(W)}{W_*}.}
The result follows.
\end{proof}

\subsection{Maximal Leakage Privacy Results}

\begin{proof}[\bf Proof of Thm.~\ref{Thm:MLDobrushinCoefficient}]
For $x_1,x_2\in\calX$, we define
\eq{S(x_1,x_2) := \frac{1}{2} \sum_{y\in\calY} |W(x_1,y) - W(x_2,y)|.}
Fix $x_1,x_2\in\calX$. Let
\eq{\calY_+ = \{y\in\calY : W(x_1,y) \geq W(x_2,y)\}}
and $\calY_- = \calY\setminus\calY_+$.
In particular, we have that
\al{
\nonumber S(x_1,x_2) =& \frac{1}{2} \sum_{y\in\calY_+} \big[W(x_1,y)-W(x_2,y)\big]\\
& \quad \quad + \frac{1}{2} \sum_{y\in\calY_-} \big[W(x_2,y) - W(x_1,y)\big].
}
Since, for every $x\in\calX$,
\eqn{eq:SumW}{\sum_{y\in\calY_+} W(x,y) + \sum_{y\in\calY_-} W(x,y) = 1,}
we have that
\al{
S(x_1,x_2) &= \sum_{y\in\calY_+} W(x_1,y) + \sum_{y\in\calY_-} W(x_2,y) - 1\\
&= \sum_{y\in\calY} \max\{W(x_1,y),W(x_2,y)\} - 1,}
where the last equality follows from the definition of $\calY_{\pm}$. Note that for all $y\in\calY$,
\eq{\max\{W(x_1,y),W(x_2,y)\} \leq \max_x W(x,y).}
Hence, \dsty{S(x_1,x_2) \leq \sum_{y\in\calY} \max_{x\in\calX} W(x,y) -1 \leq 2^\alpha -1},
where the last inequality follows as $W$ is \mbox{$\alpha$-MaxL} private. By \eqref{eq:DobrushinExpression}, we conclude that
\eq{\eta_\text{TV}(W) = \max_{x_1,x_2\in\calX} S(x_1,x_2) \leq 2^\alpha -1.}
Since $\eta_\text{TV}(W) \leq 1$ for every mapping $W:\calP(\calX)\to\calP(\calY)$, the result follows.
\end{proof}

Before proceeding with the proof of Theorem~\ref{Thm:DobrushinCoefficientMLP}, we prove the following elementary lemma.

\begin{lemma}
\label{Lemma:Means}
If $k>1$ and $a_1,\ldots,a_k$ are non-negative real numbers, then there exist $i_1 \neq i_2$ such that
\eqn{}{\frac{a_1+\cdots+a_k}{k} \leq \frac{a_{i_1}+a_{i_2}}{2}.}
\end{lemma}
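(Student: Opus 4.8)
The statement asserts that among $k \geq 2$ non-negative reals, some pair has average at least the overall average. The plan is to argue by a simple extremal/pigeonhole consideration rather than any clever manipulation. First I would let $\mu = (a_1 + \cdots + a_k)/k$ denote the mean, and observe that not all $a_i$ can be strictly less than $\mu$ (otherwise their sum would be strictly less than $k\mu$, a contradiction). Hence there exists an index $i_1$ with $a_{i_1} \geq \mu$. Since $k > 1$, we may pick any index $i_2 \neq i_1$; then $a_{i_2} \geq 0$, and therefore
\[
\frac{a_{i_1} + a_{i_2}}{2} \geq \frac{\mu + 0}{2} = \frac{\mu}{2},
\]
which is not quite what we want — so this crude bound is too lossy and I would instead pair the largest with the second largest.

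The cleaner route: relabel so that $a_1 \geq a_2 \geq \cdots \geq a_k$, and set $i_1 = 1$, $i_2 = 2$. Then $a_1 \geq a_j$ and $a_2 \geq a_j$ for every $j \geq 2$, so in particular $a_1 + a_2 \geq a_j + a_j$ is not directly useful either; rather, the correct comparison is that the two largest terms each dominate every other term, giving $\frac{a_1 + a_2}{2} \geq a_j$ for all $j \geq 2$ and $\frac{a_1+a_2}{2} \geq a_2 \geq a_1$? No — more carefully, $a_1 + a_2 \geq a_1$ and $a_1 + a_2 \geq a_j$ for $j \geq 2$ fails to bound the sum. The genuinely correct argument is: since $a_1$ is the maximum, $a_1 \geq \mu$ as above; and since $a_2$ is the second largest, it is at least the average of $a_2, \ldots, a_k$, i.e. $a_2 \geq \frac{1}{k-1}\sum_{j=2}^{k} a_j = \frac{k\mu - a_1}{k-1}$. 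Adding, $a_1 + a_2 \geq a_1 + \frac{k\mu - a_1}{k-1} = \frac{(k-1)a_1 + k\mu - a_1}{k-1} = \frac{(k-2)a_1 + k\mu}{k-1} \geq \frac{(k-2)\mu + k\mu}{k-1} = \frac{(2k-2)\mu}{k-1} = 2\mu$, using $a_1 \geq \mu$ in the last step. Dividing by $2$ yields $\frac{a_1+a_2}{2} \geq \mu$, so $i_1 = 1$, $i_2 = 2$ works.

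The main obstacle, such as it is, is purely bookkeeping: one must resist the temptation to bound a single term by $\mu$ and throw away the rest, and instead keep track of both the "maximum $\geq$ mean" fact and the "second-largest $\geq$ average of the remaining $k-1$" fact, combining them so the $a_1$ contributions cancel favorably. Once the two largest are identified, the inequality is a one-line arithmetic consequence; the non-negativity hypothesis is in fact not even needed beyond ensuring the quantities are well-defined, since the averaging argument is purely affine. I would present the relabeled version since it is the shortest, noting explicitly that the existence of two distinct indices uses $k > 1$.
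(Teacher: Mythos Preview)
Your final argument is correct: after sorting, taking $i_1,i_2$ to be the indices of the two largest values and combining $a_1\geq\mu$ with $a_2\geq(k\mu-a_1)/(k-1)$ gives $a_1+a_2\geq 2\mu$ by exactly the arithmetic you wrote out. The exploratory dead ends could be pruned, but the proof that survives is clean.

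The paper's argument is different in flavor. Rather than constructing a specific pair, it proceeds by contradiction: assuming every pair averages strictly below $\mu$, it sums the inequality $\tfrac{a_{i_1}+a_{i_2}}{2}<\mu$ over all $k(k-1)$ ordered pairs $(i_1,i_2)$ with $i_1\neq i_2$, observes that each $a_i$ is counted $2(k-1)$ times on the left, and arrives at $(k-1)s<(k-1)s$. Your route is constructive and pinpoints \emph{which} pair works (the two largest), at the cost of a small chain of inequalities; the paper's double-counting is non-constructive but essentially a one-line symmetry trick. Both proofs are elementary, and your observation that non-negativity is never used applies equally to the paper's version.
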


\noindent{\it Proof.} Let $s = a_1+\cdots+a_k$. In order to reach contradiction, assume that \dsty{\frac{a_{i_1}+a_{i_2}}{2} < \frac{s}{k}} for all $i_1 \neq i_2$. In this case,
\eqn{eq:LemmaInq}{\sum_{i_1 \neq i_2} \frac{a_{i_1}+a_{i_2}}{2} < \sum_{i_1 \neq i_2} \frac{s}{k} = (k-1)s,}
where the last equality follows from the fact that
\eq{|\{(i_1,i_2) : i_1 \neq i_2\}| = k(k-1).}
In a similar way, for $i\in\{1,\ldots,k\}$,
\eq{|\{(i_1,i_2) : i_1 \neq i_2,i_1=i \text{ or } i_2=i\}| = 2(k-1).}
In particular, we have that
\eq{\sum_{i_1 \neq i_2} \frac{a_{i_1}+a_{i_2}}{2} = \sum_{i} \sum_{\begin{smallmatrix}i_1 \neq i_2\\i_1=i \text{ or } i_2=i\end{smallmatrix}} \frac{a_i}{2} = (k-1)s.}
By \eqref{eq:LemmaInq}, we conclude that $(k-1)s < (k-1)s$. Contradiction. \qed

\begin{proof}[\bf Proof of Theorem~\ref{Thm:DobrushinCoefficientMLP}]
For each $y\in\calY$, choose $x^{(y)}\in\calX$ such that \dsty{W(x^{(y)},y) = \max_{x\in\calX} W(x,y)}. In particular,
\eq{\sum_{y\in\calY} \max_{x\in\calX} W(x,y) = \sum_{y\in\calY} W(x^{(y)},y).}
For each $x\in\calX$, let $\calY_x := \{ y\in\calY : x^{(y)} = x\}$. Note that,
\al{
\sum_{y\in\calY} \max_{x\in\calX} W(x,y) &= \sum_{y\in\calY} W(x^{(y)},y)\\
&= \sum_{x\in\calX} \sum_{y\in\calY_x} W(x^{(y)},y),
}
where the last equality uses the fact that $\{\calY_x : x\in\calX\}$ is a partition of $\calY$. Note that $W(x^{(y)},y) = W(x,y)$ for every $y\in\calY_x$, thus
\eqn{eq:ProofThm4ChangeVariable}{\sum_{y\in\calY} \max_{x\in\calX} W(x,y) = \sum_{x\in\calX} \sum_{y\in\calY_x} W(x,y).}
By Lemma~\ref{Lemma:Means}, \eqref{eq:ProofThm4ChangeVariable} implies that there exist $x_1 \neq x_2$ such that
\eq{\frac{2}{|\calX|} \sum_{y\in\calY} \max_{x\in\calX} W(x,y) \leq \sum_{y\in\calY_{x_1}} W(x_1,y) + \sum_{y\in\calY_{x_2}} W(x_2,y).}
Note that for all $y\in\calY_{x_1}$, $W(x_1,y) = W(x^{(y)},y) \geq W(x_2,y)$ and, in particular,
\eq{W(x_1,y) = \max\{W(x_1,y),W(x_2,y)\}.}
Also, $W(x_2,y) = \max\{W(x_1,y),W(x_2,y)\}$ for all $y\in\calY_{x_2}$. Altogether, we have that
\aln{
\nonumber \frac{2}{|\calX|} \sum_{y\in\calY} \max_{x\in\calX} W(x,y) &\leq \hspace{-3pt} \sum_{y\in\calY_{x_1}\cup\calY_{x_2}} \hspace{-3pt} \max\{W(x_1,y),W(x_2,y)\}\\
\label{eq:ProofThm4InqMaxLMax}&\leq \sum_{y\in\calY} \max\{W(x_1,y),W(x_2,y)\},
}
where we used the fact that $\calY_{x_1} \cap \calY_{x_2} = \emptyset$ and $\calY_{x_1} \cup \calY_{x_2} \subset \calY$. Recall that, by \eqref{eq:DobrushinExpression},
\eq{\eta_{\text{TV}}(W) = \max_{x,x'\in\calX} \frac{1}{2} \sum_{y\in\calY} |W(x,y) - W(x',y)|.}
In particular, we have that
\al{
\eta_{\text{TV}}(W) \geq& \frac{1}{2} \sum_{y\in\calY} |W(x_1,y) - W(x_2,y)|\\
=& \frac{1}{2}\sum_{y\in\calY_+} \big[W(x_1,y)-W(x_2,y)\big]\\
& \quad \quad + \frac{1}{2}\sum_{y\in\calY_-} \big[W(x_2,y)-W(x_1,y)\big],
}
where, as before, $\calY_+ = \{y\in\calY : W(x_1,y) \geq W(x_2,y)\}$ and $\calY_-  = \calY\setminus\calY_+$. By \eqref{eq:SumW} we obtain that
\eq{\eta_{\text{TV}}(W) \geq \sum_{y\in\calY_+} W(x_1,y) + \sum_{y\in\calY_-} W(x_2,y) - 1.}
By definition of $\calY_{\pm}$, we conclude that
\eqn{eq:ProofThm4LastInq}{1+\eta_{\text{TV}}(W) \geq \sum_{y\in\calY} \max\{W(x_1,y),W(x_2,y)\}.}
By \eqref{eq:ProofThm4LastInq} and \eqref{eq:ProofThm4InqMaxLMax}, we conclude that
\eq{1+\eta_{\text{TV}}(W)\geq \frac{2}{|\calX|} \sum_{y\in\calY} \max_{x\in\calX} W(x,y),}
as we wanted to show.
\end{proof}

\bibliographystyle{IEEEtran}
\bibliography{ContractivityPrivacyMechanismsFullVersion}
\end{document}